\tikzset{
  external/optimize=true,
  external/figure name=figure_\thechapter_,
}
\newif\ifpdfplots
\newcommand{%
  \ifpdfplots
    \includegraphics{pdf_plots/.pdf}
  \else
    \tikzsetnextfilename{pdf_plots/}%
    \input{tikz_plots/}%
  \fi
}[1]{%
  \ifpdfplots
    \includegraphics{pdf_plots/#1.pdf}
  \else
    \tikzsetnextfilename{pdf_plots/#1}%
    \input{tikz_plots/#1}%
  \fi
}
\newcommand{\placeholder}[2]{\DTLfetch{#1}{key}{#2}{value}}
\newcommand{\arro}[1]{[\, #1 \kern.1em )}
\newcommand{\Oh}[1]{\ensuremath{\mathcal{O}(#1)}\xspace}
\newif\ifshowcomments
\definecolor{fuchsiapink}{rgb}{1.0, 0.47, 1.0}
\newcommand{\tobi}[1]{{\color{fuchsiapink}[TH: #1]}}
\newcommand{\lars}[1]{{\color{cyan}{[Lars: #1]}}}
\newcommand{\peter}[1]{{\color{blue}[PS: #1]}}
\newcommand{\seb}[1]{{\color{orange}[SS: #1]}}
\newcommand{\todo}[1]{{\textcolor{red}{\bf [TODO]} \emph{#1}}}
\newcommand{\tobi}[1]{}
\newcommand{\lars}[1]{}
\newcommand{\peter}[1]{}
\newcommand{\seb}[1]{}
\newcommand{\todo}[1]{}
\newcommand{\etal}{{et al}.}
\newcommand{\Increment}{\raisebox{.025ex}{\hbox{\tt ++}}}
\newcommand{\EndOfStatement}{;\quad}
\newcommand{\neighbors}{\ensuremath{{N}}}%
\newcommand{\incnets}{\ensuremath{{I}}}%
\newcommand{\maxdeg}[1]{\ensuremath{\Delta_{#1}}}
\newcommand{\maxsize}[1]{\ensuremath{\Delta_{#1}}}
\newcommand{\meddeg}{\ensuremath{\widetilde{d(v)}}}
\newcommand{\medsize}{\ensuremath{\widetilde{|e|}}}
\newcommand{\Partition}{\ensuremath{{\Pi}}}%
\newcommand{\pinsinpart}{\ensuremath{{\Phi}}}
\newcommand{\adjblocks}{\ensuremath{{R}}}
\newcommand{\avgadjblocks}{\ensuremath{{\overbar{\adjblocks(u)}}}}
\newcommand{\con}{\ensuremath{\lambda}}
\newcommand{\conset}{\ensuremath{\Lambda}}
\newcommand{\nodeblock}[1]{\ensuremath{\Partition[#1]}}
\newcommand{\ocut}{\ensuremath{\mathfrak{f}_c}}%
\newcommand{\ocon}{\ensuremath{\mathfrak{f}_{\lambda-1}}}%
\newcommand{\gain}[2]{\ensuremath{g_{#1}(V_{#2})}}
\newcommand{\cutnets}{\ensuremath{E_{\text{Cut}}(\Partition)}}
\newcommand{\attrgain}{\ensuremath{\Delta}}
\newcommand{\cluster}{\mathcal{C}}
\newcommand{\flownetwork}{\ensuremath{\mathcal{N}}}
\newcommand{\flownodeset}{\ensuremath{\mathcal{V}}}
\newcommand{\flowedgeset}{\ensuremath{\mathcal{E}}}
\DeclareMathAlphabet{\mathpzc}{OT1}{pzc}{m}{n}
\newcommand{\capacity}{\ensuremath{\mathpzc{c}}}
\newcommand{\source}{\ensuremath{s}}
\newcommand{\sink}{\ensuremath{t}}
\newcommand{\overbar}[1]{\mkern 1.5mu\overline{\mkern-1.5mu#1\mkern-1.5mu}\mkern 1.5mu}
\newcommand{\subhypergraph}[1]{\ensuremath{H[#1]}}
\newcommand{\subgraph}[1]{\ensuremath{G[#1]}}
\newcommand{\balancedconstraint}[1]{\ensuremath{L_{#1}}}
\newcommand{\graphdef}[1]{\ensuremath{G^{#1}}}
\newcommand{\routinggraph}{\graphdef{R}}
\newcommand{\globalroutinggraph}{\graphdef{GR}}
\newcommand{\targetarch}{\ensuremath{T}}
\newcommand{\targetarchgraph}{\graphdef{\targetarch}}
\newcommand{\osteiner}{\ensuremath{\mathfrak{f}_{\text{\tiny ST}}}}%
\newcommand{\dist}{\ensuremath{\text{\textsc{dist}}}}%
\newcommand{\deltadist}[1]{\ensuremath{\Delta \dist(e, #1)}}%
\newcommand{\steinertreethres}{\ensuremath{t}}%
\DeclareMathOperator*{\argmin}{arg\,min}
\newcommand{\gmeantime}{geo\-metric mean running time}
\newcommand{\splitatcommas}[1]{%
  \begingroup
  \begingroup\lccode`~=`, \lowercase{\endgroup
    \edef~{\mathchar\the\mathcode`, \penalty0 \noexpand\hspace{0pt plus 1em}}%
  }\mathcode`,="8000 #1%
  \endgroup
}
\newcommand{\ALL}{\textsc{All}}
\newcommand{\SAT}{\textsc{Sat}}
\newcommand{\SPM}{\textsc{Spm}}
\newcommand{\VLSI}{\textsc{Vlsi}}
\newcommand{\Partitioner}[1]{\textsf{#1}} 
\newcommand{\gpp}{\texttt{g++}}
\newif\ifdoubleblindmode
\newcommand{\doubleblind}[2]{%
  \ifdoubleblindmode
    #1
  \else
    #2
  \fi
}
\newcommand{\anonymousauthor}[1]{\doubleblind{Anonymous Author}{#1}}
\newcommand{\anonymousuniversity}[1]{\doubleblind{University,}{#1}}
\newcommand{\anonymouscitycountry}[1]{\doubleblind{City, Country.}{#1}}
\newcommand{\anonymousmail}[1]{\doubleblind{anonymous@author.com}{#1}}
\newcommand{\anonymousurl}[2]{\doubleblind{\url{#2}}{\url{#1}}}
\newtheorem{definition}{Definition}
\begin{document}

\title{\Large A Direct $k$-Way Hypergraph Partitioning Algorithm \\ for Optimizing the Steiner Tree Metric}
\author{\anonymousauthor{Tobias Heuer}\thanks{\anonymousuniversity{Karlsruhe Institute of Technology,} \anonymouscitycountry{Karlsruhe, Germany.} \anonymousmail{tobias.heuer@kit.edu}}}

\date{}

\maketitle







\begin{abstract} 

Minimizing wire-lengths is one of the most important objectives in circuit design.
The process involves initially \emph{placing} the logical units (cells) of a circuit onto a
physical layout, and subsequently \emph{routing} the wires to connect the cells.
Hypergraph partitioning (HGP) has been long used as a placement
strategy in this process. However, it has been replaced by other methods due to the limitation
that common HGP objective funtions only optimize wire-lengths \emph{implicitly}.
In this work, we present a novel HGP formulation
that maps a hypergraph $H$, representing a logical circuit, onto a routing layout represented by a
weighted graph $G$. The objective is to minimize the total length of all wires induced by the hyperedges
of $H$ on $G$. To capture wire-lengths, we compute minimal Steiner trees - a metric commonly used in
routing algorithms.


For this formulation, we present the first direct $k$-way multilevel mapping algorithm
that incorporates techniques used by the highest-quality partitioning algorithms.
We contribute a greedy mapping algorithm to compute an initial solution and three refinement algorithms
to improve the initial mapping: Two move-based local search heuristics (based on label propagation and the FM algorithm)
and a refinement algorithm based on max-flow min-cut computations.

Our experiments demonstrate that our new
algorithm achieves an improvement in the Steiner tree metric by $7\%$ (median) on VLSI instances
when compared to the best performing partitioning algorithm that optimizes the mapping in a
postprocessing step.
Although computing Steiner trees is an NP-hard problem,
we achieve this improvement with only a $2$--$3$ times slowdown in partitioning time compared to optimizing the
connectivity metric.

\end{abstract}


\section{Introduction}

The design of complex integrated circuits is a key driver of innovation in modern technology.
The advancements in the field are attributed not only to progress in semiconductor technology but also to
algorithmic innovations enabling the realization of very large-scale circuits. The physical design of a modern chip
involves three stages: In modeling stage, a logical circuit is designed that consists of logical units
(cells) interconnected by wires. In the placement phase, the cells are assigned to
non-overlapping locations within a chip area. Subsequently, during the routing phase, the cells
are interconnected using well-separated wires.
There exist multiple metrics to evaluate the quality of a physical layout,
but a common approach is to minimize the total length of all wires.
This effectively reduces signal delays, power consumption, and the layout area of the chip~\cite{HeldKRV11,ALPERT-SURVEY, GrotschelMW97}.
The placement phase plays an important role in this process as the locations of cells significantly influence
the achievable wire-lengths in the routing phase.

Algorithms based on hypergraph partitioning have been a key approach for placing cells on a chip in the past.
A hypergraph is a generalization of a graph where a \emph{hyperedge} can connect more than two nodes.
This makes hypergraphs an appropriate model to represent logical circuits since a wire can connect an abitrary number of cells.
The \emph{hypergraph partitioning} problem (HGP) is to partition the node set of a hypergraph into $k$ blocks of
roughly the same size while simultanously minimizing an objective function defined on the hyperedges.
HGP is used to partition a logical circuit into densely-connected blocks by minimizing the number of blocks spanned by each hyperedge
(\emph{connectivity} metric). These blocks are subsequently assigned to subregions of the layout area for which the process is repeated recursively
until the regions are small enough~\cite{ShahookarM91,CaldwellKM00,HuangK97}.
An end-case placer then maps the cells to locations on the chip~\cite{CaldwellKM00}.
In recent years, partitioning-based placement techniques have fallen out of fashion since they were outperformed by numerical methods~\cite{XuGA11,pattison2016gplace}.
This can be attributed to the fact that minimizing the connectivity metric
only minimizes wire-lengths \emph{implicitly}~\cite{wang2017survey,HuangK97,MarkovHK12}.

\paragraph{Contributions.}
In this work, we present a HGP formulation that maps the nodes of a hypergraph $H$
onto the nodes of a routing layout represented by a weighted graph $G$ such
that the total length of all wires spanned by the hyperedges of $H$ on $G$ is minimized.
To capture wire-lengths, we compute minimal Steiner trees induced by the hyperedges on $G$ --
a metric commonly used in routing algorithms~\cite{GrotschelMW97,HeldKRV11,TangLCX20}.
Unlike similar HGP formulations for wire-length minimization~\cite{RoyLM06,IMF,THETO}, we approach
the problem from a graph-theoretical perspective that does not rely on geometric information.
This makes it applicable to a wide range of similar problems.
For example, the target graph $G$ can also represent the communication links and
costs between processors in a distributed computing cluster.

Our main algorithmic contribution is the \emph{first} direct $k$-way multilevel mapping
algorithm for optimizing the Steiner tree metric.
This includes an efficient \emph{gain table} data structure that stores and maintains
the gain values for all possible node moves. The gain table is used by our implementation of the
FM algorithm to search for improvement. We further devise a novel \emph{flow network model}
to optimize the Steiner tree metric via max-flow min-cut computations.
Additionally, we implement a greedy mapping algorithm to find an initial solution.


\paragraph{Results.}
We integrated the mapping algorithm into the state-of-the-art shared-memory partitioner
\Partitioner{Mt-KaHyPar}~\cite{MT-KAHYPAR-JOURNAL} and
evaluated it on $175$ graphs and hypergraphs with up to $100$ million edges/pins.
The experiments demonstrates that our algorithm outperforms traditional partitioning algorithms that optimize the mapping in a postprocessing step.
More precisely, we achieve an improvement in the Steiner tree metric by $7\%$ in the median on \VLSI~instances
when compared to the best performing traditional approach.
Optimizing the Steiner tree metric comes at the cost of a $2$--$3$ times slowdown in partitioning time
compared to optimizing the connectivity metric. However, it is almost an order of magnitude faster than
the connectivity optimization code of \Partitioner{KaHyPar}~\cite{KAHYPAR-JOURNAL} and \Partitioner{hMetis}~\cite{HMETIS}
when using ten threads.
Moreover, our new algorithm computes better mappings than comparable implementations for mapping graphs
onto hierarchical processor architectures.

\paragraph{Outline.}
After describing basic notations and reviewing related work in Section~\ref{sec:preliminaries}
and~\ref{sec:related_work}, we introduce the Steiner tree metric in Section~\ref{sec:steiner_tree}.
Section~\ref{sec:overview}--\ref{sec:flows} presents our direct $k$-way mapping algorithm for optimizing
the Steiner tree metric, which we compare to other algorithms in Section~\ref{sec:experiments}.
Section~\ref{sec:conclusion} concludes the work.

\begin{figure*}[!ht]
	\centering
	\includegraphics[width=0.9\textwidth]{}
	\caption{A routing of wires on a physical layout (left) and a global routing graph with five regions (right).  }\label{fig:routing_example}
  \vspace{-0.31cm}
\end{figure*}

\section{Preliminaries}\label{sec:preliminaries}

\paragraph{Hypergraphs.}
A \emph{weighted hypergraph} $H=(V,E,c,\omega)$ is defined as a set of  $n$ nodes $V$ and a set of $m$
hyperedges $E$ (also called \emph{nets})
with node weights $c:V \rightarrow \mathbb{R}_{>0}$ and net weights $\omega:E \rightarrow \mathbb{R}_{>0}$, where each net $e$ is
a subset of the node set $V$. The nodes of a net are called its \emph{pins}.
We extend $c$ and $\omega$ to sets in a natural way, i.e., $c(U) :=\sum_{u \in U} c(u)$ and $\omega(F) :=\sum_{e \in F} \omega(e)$.
A node $u$ is \emph{incident} to a net $e$ if $u \in e$.
$\incnets(u) := \{e \mid u \in e\}$ is the set of all incident nets of $u$.
The set $\neighbors(u) := \{ v \mid \exists e \in E : \{u,v\} \subseteq e\}$ denotes the neighbors of $u$.
Two nodes $u$ and $v$ are \emph{adjacent} if $v \in \neighbors(u)$.
The \emph{degree} of a node $u$ is $d(u) := |\mathrm{I}(u)|$.
The \emph{size} $|e|$ of a net $e$ is the number of its pins.
We denote the number of pins of a hypergraph with $p := \sum_{e \in E} |e| = \sum_{v \in V} d(v)$.
Given a subset $V' \subset V$, the \emph{subhypergraph}
$\subhypergraph{V'}$ is defined as $\subhypergraph{V'}:=(V', \{e \cap V' \mid e \in E : e \cap V' \neq \emptyset \}, c, \omega')$
where $\omega'(e \cap V')$ is the weight of hyperedge $e$ in $H$.

An undirected and weighted graph $G = (V,E,c,\omega)$ can be considered as a hypergraph where each net contains only two pins (also called an \emph{edge}).
We therefore use the same notation for undirected graphs.
We define the weight of an edge $e = \{u,v\} \in E$ as $\omega(u,v) := \omega(e)$.

\paragraph{Partitions.}
A \emph{$k$-way partition} of a hypergraph $H$ is a partition of its node set $V$ into $k$ disjoint blocks
$\Partition = \{V_1, \ldots, V_k\}$.
A $2$-way partition is also called a \emph{bipartition}.
We denote the block to which a node $u$ is assigned by $\nodeblock{u}$.
For each net $e$, $\pinsinpart(e,V_i) := |e \cap V_i|$ denotes the number of pins of net $e$ in block $V_i$.
The \emph{connectivity set} $\conset(e) := \{V_i \mid  \pinsinpart(e, V_i) > 0\}$ contains the blocks to which pins of net $e$ are assigned.
Similarly, $\adjblocks(u) := \{ \nodeblock{v} \mid v \in \neighbors(u) \}$ are the blocks to which node $u$ is adjacent.
The \emph{connectivity} $\con(e)$ of a net $e$ is $\con(e) := |\conset(e)|$.
A net is called a \emph{cut net} if $\con(e) > 1$.
Similarly, a node $u$ is a \emph{boundary node} if $|\adjblocks(u)| > 1$.

\paragraph{Balanced Hypergraph Partitioning.}

The \emph{balanced hypergraph partitioning problem} (HGP) is to
find a $k$-way partition $\Partition$ of a hypergraph $H$ that minimizes an objective function defined on the hyperedges, where
each block $V' \in \Partition$ satisfies the \emph{balance constraint}: $c(V') \leq \balancedconstraint{\max} := (1+\varepsilon) \lceil \frac{c(V)}{k} \rceil$
for some \emph{imbalance ratio} $\varepsilon \in (0,1)$.
If $\Partition$ satisfies the balance constraint, we call $\Partition$ \emph{$\varepsilon$-balanced}.
For $k = 2$, we refer to the problem as the \emph{bipartitioning} problem.
The two most prominent objective functions are the \emph{cut-net} metric $\ocut(\Partition) := \sum_{e \in \cutnets} \omega(e)$
and \emph{connectivity} metric $\ocon(\Partition) := \sum_{e \in \cutnets} (\lambda(e) - 1) \cdot \omega(e)$
where $\cutnets$ denotes the set of all cut nets.
HGP is NP-hard for both objective functions~\cite{LENGAUER,DBLP:journals/tcs/GareyJS76}.

\paragraph{Steiner Trees.}

For a weighted graph $G = (V,E,\omega)$ and $N$ \emph{terminal sets} $T_1, \ldots, T_N \subseteq V$, the
\emph{Steiner tree packing problem} asks for $N$ edge-disjoint trees $S_1, \ldots, S_N \subseteq E$
such that each $S_i$ spans the nodes in $T_i$ and $\sum_{i = 1}^{N} \omega(S_i)$ is minimal.
For a single terminal set $T$, the problem reverts to the \emph{Steiner tree problem}, which is known
to be NP-hard~\cite{SMT-NP-HARD}. If $T = V$, the problem can be reduced to the \emph{minimum spanning tree problem} (MST),
which is solvable in polynomial time. There exists a $2$-approximation for the Steiner tree problem by computing an MST
on the metric completion $G^T := (T, T \times T, d)$ of $G$ induced by $T$, where $d(u,v)$ is the
shortest distance from $u$ to $v$ in $G$~\cite{KouMB81}.

\paragraph{VLSI Design.}
A logical circuit can be represented as a hypergraph $H = (V,E)$, where the \emph{netlist} $E = \{e_1, \ldots, e_m\}$ describes
the wiring between the cells.
In VLSI design, cells are assigned to locations on a 2D grid, and subsequently
interconnected through disjoint \emph{routing channels}, as illustrated in Figure~\ref{fig:routing_example} (left).
The routing area can be modeled with a \emph{routing graph} $\routinggraph = (V,E,\omega)$ where nodes are
cell locations and each edge connects two cells $u$ and $v$ with distance $\omega(u,v)$.
The routing problem can be formulated as a Steiner tree packing problem on $\routinggraph$ with terminal sets
$e_1, \ldots, e_m$~\cite{GrotschelMW97,HeldKRV11}.

The process of finding a feasible routing is typically decomposed into two steps: \emph{global} and \emph{detailed} routing~\cite{TangLCX20}.
Global routing partitions the layout area into subregions and solves the routing problem on a coarser approximation of $\routinggraph$.
This approximation is represented by a \emph{global routing graph} $\globalroutinggraph = (V,E,c,\omega)$, where
the node set are the regions and two regions are connected if they share a common border, as illustrated in Figure~\ref{fig:routing_example} (right).
Additionally, each edge $\{u,v\} \in E$ has a capacity $c(u,v)$ denoting the maximum number of wires that can cross the
border~\cite{TangLCX20}, and is associated with the distance $\omega(u,v)$ between the centers of both regions~\cite{GrotschelMW97}.
The global routing problem can be also expressed as a slightly modified version of the Steiner tree packing problem on $\globalroutinggraph$
with all nets that span multiple regions as terminal sets and the relaxation that each edge $\{u,v\} \in E$ can be used
in up to $c(u,v)$ different trees~\cite{GrotschelMW97}.
In detailed routing, the cells within each region are routed.

\section{Related Work}\label{sec:related_work}

\paragraph{Hypergraph Partitioning.}
There is a vast amount of literature on hypergraph partitioning for which we refer the reader to
extensive survey articles~\cite{ALPERT-SURVEY,GRAPH-SURVEY,PAPA-MARKOV,KAHYPAR-DIS,HYPERGRAPH-SURVEY}.
In addition to VLSI design, HGP has applications in minimizing the communication volume in parallel scientific
simulations~\cite{PATOH,DBLP:conf/ipps/CatalyurekA01,DBLP:conf/sc/CatalyurekA01}, storage sharding in distributed
databases~\cite{schism,sword,SHP,clay}, simulations of distributed quantum circuits~\cite{gray2021hyper,andres2019automated},
and as a branching strategy in satisfiability solvers~\cite{SATApplication}.
The most notable software packages are \Partitioner{hMetis}~\cite{HMETIS}, \Partitioner{PaToH}~\cite{PATOH},
\Partitioner{KaHyPar}~\cite{KAHYPAR-JOURNAL} (sequential), \Partitioner{Mt-KaHyPar}~\cite{MT-KAHYPAR-JOURNAL} (shared-memory),
and \Partitioner{Zoltan}~\cite{ZOLTAN} (distributed-memory).


\paragraph{Wire-Length Minimization via Partitioning.}
Previous attempts to minimize wire-lengths via HGP include methods that optimize metrics
such as the half-perimeter of the bounding boxes around each net~\cite{THETO,IMF} (HPWL), or Rectilinear Steiner Minimal trees induced
by the nets on the physical layout~\cite{RoyLM06} (RSMT, Steiner trees based on Manhatten distance).
These algorithms use recursive bipartitioning (RB) to recursively assign the cells to subregions of the chip.
However, before each bipartitioning step, the weights of the nets are adjusted to ensure that the cut-net metric is equal
to either the HPWL or RSMT metric.
A min-cut algorithm is then used to obtain a bipartition.
However, these approaches can not be generalized to abitrary values of $k$ and it has been shown that
RB can produce $k$-way partitions far from optimal~\cite{SimonTeng97}.
Moreover, these methods rely on geometric information, making them less
applicable to other problems.

Huang~\etal~\cite{HuangK97} implemented a direct $k$-way partitioning algorithm that optimizes a metric based on
computing MSTs. However, their algorithm is restricted to at most four blocks since the space complexity grows exponentially with $k$.

\paragraph{Process Mapping.}
For a terminal set $T = \{u,v\}$ with two nodes, finding the optimal Steiner tree reverts to computing the shortest distance between them.
A metric based on shortest distances has gained attention in the context of mapping
graphs onto communication networks, which models the communication links and
costs between processors in a distributed computing system.
The communication costs induced by an edge $\{u,v\}$ is the shortest
distance between the two processors to which $u$ and $v$ are assigned.
This problem is known as the \emph{general process mapping problem} (GPMP).
Algorithms for solving GPMP employ either a two-phase approach~\cite{HoeflerS11,JeannotMT14,ChenCHRK06,Kirchbach0T20} or directly optimize
the mapping during partitioning~\cite{PellegriniR96,WalshawC01,KAFFPA-IMAP,PARHIP-IMAP}.
The former first computes a $k$-way partition $\Partition$ of the
input graph optimizing the number of cut edges. Afterwards, each block of $\Partition$
is assigned to a node of the communication network, to which we refer
as the \emph{one-to-one process mapping problem} (OPMP, graph and communication network have the same number of nodes). Algorithms for solving OPMP include
greedy construction~\cite{GlantzMN15,MuellerMerbach} and local search methods based on exchanging the block
assignment of two nodes~\cite{Brandfass2013,Heider72,Kirchbach0T20}.

\section{The Steiner Tree Metric}\label{sec:steiner_tree}


In this work, we revisit wire-length minimization via hypergraph partitioning,
but approach it from a graph-theoretical perspective that does not require any geometric
information. The following definition can be seen as a generalization of the general process mapping problem
from graphs to hypergraphs, making it applicable to a wide range of applications.
\begin{definition}[The Steiner Tree Metric]
Given a weighted hypergraph $H = (V,E,c,\omega)$ and a target graph $\targetarchgraph = (V_\targetarch, E_\targetarch, \omega_\targetarch)$,
the task is to find an $\varepsilon$-balanced mapping $\Partition: V \rightarrow V_\targetarch$ that minimizes the
Steiner tree metric
\begin{equation*}
  \osteiner := \sum_{e \in E} \dist(\conset(e)) \cdot \omega(e)
\end{equation*}
where $\dist(\conset(e))$ is the weight of the minimal Steiner tree connecting the blocks $\conset(e)$ spanned by
net $e \in E$ in $\targetarchgraph$.
\end{definition}
The target graph $\targetarchgraph$ can represent a global routing graph using
an abitrary distance metric, or a communication network modeling communication links and costs between processors.
Moreover, the Steiner tree metric reverts to the objective function for GPMP on plain graphs and to the
connectivity metric for a complete target graph with unit edge weights.

Note that the mapping $\Partition$ represents a $k$-way partition into $k := |V_\targetarch|$ blocks.
We therefore apply partition-related notations and definitions also to mappings.

\paragraph{Calculating $\osteiner$.}
The computation of the Steiner tree metric involves solving an NP-hard problem for each net $e \in E$.
However, it has been observed that real-world VLSI instances contain a large number of small nets, with
only a few nets connecting several hundred or thousands of cells~\cite{CaldwellKKM99,PAPA-MARKOV,KAHYPAR-DIS}.
This observation suggests that computing a small number of optimal Steiner trees suffices to obtain a near-optimal
approximation of $\osteiner$.
We therefore precompute all Steiner trees for terminal sets up to a certain size $\steinertreethres$
using an algorithm proposed by Dreyfus and Wagner~\cite{DreyfusW71}. The algorithm is based
on dynamic programing and has a running time of $\Oh{k^3 + k^2(2^{t} - t) + k(3^{t} - 2^{t+1} + 3)}$.
For nets $e \in E$ with $\con(e) > \steinertreethres$, we calculate a $2$-approximation of the minimal Steiner tree in
$\Oh{\con(e)^2 + \con(e)\log{\con(e)}}$ time by using Prim's MST algorithm~\cite{KouMB81}.
We additionally cache the result of a MST computation in a hash table for subsequent retrievals.

We store the connectivity set $\conset(e)$ of
each net $e \in E$ as a bitset of size $k$, and use \emph{count-leading-zeroes} instructions to iterate
over $\conset(e)$ and \emph{pop-count} instructions to compute $\con(e)$.
Therefore, retrieving a precomputed or cached $\dist(\conset(e))$ value takes
$\Oh{\con(e)}$ time since we have to calculate a hash value based on $\conset(e)$.
In the following, we assume that $\dist(\conset(e))$ can be calculated in constant time
to simplify complexity discussions.


\paragraph{Gain Concept.}
The \emph{gain} of a node move is defined as the change in the objective function when we move a node $u$
from its current block $\Partition[u]$ to a target block $V_t$. This concept is central for all move-based local search
algorithms~\cite{FM,KL,HYPERGRAPH-KL}.
Clearly, the Steiner tree metric $\osteiner$ only changes if a block is added or removed from the connectivity set $\conset(e)$
of a net $e$. We therefore introduce $\deltadist{\Lambda'} := \dist(\conset(e)) -  \dist(\Lambda')$ which is the difference
in the weight of the minimal Steiner tree if $\conset(e)$ changes to $\Lambda' \subseteq \Partition$.
If $\deltadist{\Lambda'} > 0$, then $\Lambda'$ induces a shorter Steiner tree than $\conset(e)$ on $\targetarchgraph$.

A node move from block $V_s$ to $V_t$ can either remove $V_s$, add $V_t$, or replace $V_s$ with $V_t$ in $\conset(e)$.
We can compute the state of $\conset(e)$ after the node move based on the pin count values $\pinsinpart(e,V_s)$ and
$\pinsinpart(e,V_t)$ of net $e$. For example, if $\pinsinpart(e,V_s) = 1$ and $\pinsinpart(e,V_t) > 0$ then the node move
removes $V_s$ from $\conset(e)$.
We can define the gain $\gain{u}{t}$ of moving a node $u$ from its current block $V_s := \Partition[u]$ to $V_t$
for the Steiner tree metric as follows:

\begin{equation*}
  \begin{aligned}
    \gain{u}{t} := & \sum_{\substack{e \in \incnets(u) \\ \pinsinpart(e,V_s) = 1 \\ \pinsinpart(e,V_t) > 0}} \deltadist{\conset(e) \setminus \{V_s\}} \cdot \omega(e) + \\
                   & \sum_{\substack{e \in \incnets(u) \\ \pinsinpart(e,V_s) > 1 \\ \pinsinpart(e,V_t) = 0}} \deltadist{\conset(e) \cup \{V_t\}} \cdot \omega(e)      +  \\
                   & \sum_{\substack{e \in \incnets(u) \\ \pinsinpart(e,V_s) = 1 \\ \pinsinpart(e,V_t) = 0}} \deltadist{\conset(e) \setminus \{V_s\} \cup \{V_t\}} \cdot \omega(e)
    \end{aligned}
    \vspace{-0.175cm}
\end{equation*}

Our algorithm stores and maintains the pin count values of each net and block.
Thus, we can compute $\gain{u}{t}$ in $\Oh{|\incnets(u)|}$ time and the target block with the highest gain for a node $u$ in
$\Oh{|\adjblocks(u)||\incnets(u)|}$ time.

\section{Algorithm Overview}\label{sec:overview}
Algorithm~\ref{pseudocode:multilevel} shows the high-level structure of our mapping algorithm for optimizing the Steiner tree metric.
It implements the multilevel scheme which is the most successful method to solve the partitioning
problem~\cite{KAHYPAR-JOURNAL,KAFFPA,PATOH,HMETIS,MT-KAHYPAR-JOURNAL,KAMINPAR}.
The technique consists of three phases.
First, the hypergraph is \emph{coarsened} to obtain a hierarchy of successively smaller and structurally similar
approximations of the input hypergraph by \emph{contracting} clusters of highly-connected nodes (Line~\ref{multilevel:coarsening_pass}--\ref{multilevel:contraction}).
Once the hypergraph is small enough, an \emph{initial mapping} of the smallest hypergraph onto $\targetarchgraph$ is computed (Line~\ref{multilevel:initial_mapping}).
Subsequently, the contractions are reverted level-by-level, and, on each level, \emph{local search} heuristics are
used to improve the mapping from the previous level (Line~\ref{multilevel:uncoarsening}--\ref{multilevel:flows}).

We implemented our algorithm in the shared-memory hypergraph partitioner \Partitioner{Mt-KaHyPar}~\cite{MT-KAHYPAR-D,MT-KAHYPAR-Q,MT-KAHYPAR-FLOWS,MT-KAHYPAR-JOURNAL}.
\Partitioner{Mt-KaHyPar} already implements the multilevel scheme shown in Algorithm~\ref{pseudocode:multilevel} and
provides an interface for implementing new objective functions without touching the internal implementation
of the refinement algorithms.
Thus, we reuse many of its components.
The main algorithmic contributions include an \emph{initial mapping} algorithm that maps
the smallest hypergraph onto $\targetarchgraph$, as well as a \emph{gain table} that stores and maintains the gain values
for all possible node moves.
We use the gain table to lookup gain values for node moves in constant time.
Additionally, we present the first \emph{flow network model} for optimizing $\osteiner$
via max-flow min-cut computations.
We structure the following algorithm description according to the different phases of the multilevel scheme.
Our focus will be on highlighting our algorithmic contributions, while providing only a high-level description of the reused components.
For more details on \Partitioner{Mt-KaHyPar}, we refer the reader to Ref.~\cite{MT-KAHYPAR-JOURNAL}.

\begin{algorithm2e}[!t]
\KwIn{$H = (V,E)$ and $\targetarchgraph = (V_\targetarch, E_\targetarch, \omega_\targetarch)$}
\KwOut{Mapping $\Partition$ of $H$ onto $\targetarchgraph$}
\caption{The Mapping Algorithm
}\label{pseudocode:multilevel}

$H_1 := (V_1, E_1) \gets H\EndOfStatement n \gets 1$\;
\While (\label{multilevel:coarsening_pass}) {$V_n$ has too many nodes}{
  $\cluster \gets \FuncSty{ComputeClustering}(H_n)$ \;
  $H_{n + 1} \gets H_n.\FuncSty{Contract}(\cluster) \EndOfStatement \Increment n$\; \label{multilevel:contraction}
}

$\Partition \gets$ \FuncSty{InitialMapping}($H_n, \targetarchgraph$)\; \label{multilevel:initial_mapping}
\For (\label{multilevel:uncoarsening}) {$i = n - 1$ {\bf down to} $1$}{
  $\Partition \gets$ project $\Partition$ onto $H_i$\;
  $\FuncSty{LabelPropagationRefinement}(H_i,\Partition)$\; \label{multilevel:lp} 
  $\FuncSty{FMRefinement}(H_i,\Partition)$\;   
  $\FuncSty{FlowBasedRefinement}(H_i,\Partition)$\; \label{multilevel:flows} 
}

\Return{$\Partition$}
\end{algorithm2e}

\section{Coarsening}\label{sec:coarsening}

Our coarsening algorithm repeatedly finds a clustering $\cluster$ of the nodes and subsequently
contracts it until the hypergraph is small enough ($\approx 160k$ nodes).

The clustering algorithm initially assigns each node to a \emph{singleton} cluster ($\cluster = V$).
Then, it iterates over all nodes $u \in V$ in parallel, and assigns each node $u$ to the cluster $C \in \cluster$
that maximizes the heavy-edge rating function $r(u,C) := \sum_{e \in \incnets(u) \cap \incnets(C)} \frac{\omega(e)}{|e|-1}$.
This function is commonly used in the partitioning literature~\cite{PATOH,HMETIS,KAHYPAR-K} and prefers clusters connected to $u$ via
a large number of heavy nets with small size.

The contraction step replaces each cluster $C_i \in \cluster$ with one supernode $u_i$ with weight $c(u_i) = \sum_{v \in C_i} c(v)$.
For each net $e \in E$, it replaces each pin $v \in e$ with its corresponding supernode.

\section{Initial Mapping}
\label{sec:initial_mapping}

Our initial mapping algorithm follows the two-phase approach. We first compute a $k$-way partition
of the hypergraph using \Partitioner{Mt-KaHyPar}'s initial partitioning algorithm for optimizing
the connectivity metric $\ocon$. Subsequently, we contract each block
of the partition to obtain a one-to-one process mapping problem (OPMP).
To solve OPMP, we implement a greedy strategy to construct an initial mapping, which we
subsequently refine using a local search algorithm. Recall that the hypergraph and target graph
have the same number of nodes when solving OPMP ($k = n$).


\paragraph{Greedy Mapping.}
We generalized the greedy construction method of Glantz~\etal~\cite{GlantzMN15} for OPMP on graphs to hypergraphs.
The algorithm starts by assigning a seed node $u$ of $H$ to a block $b$ of $\targetarchgraph$
with the smallest communication volume ($b := \argmin_{u \in V_\targetarch} \sum_{\{u,v\} \in E_\targetarch} \omega(u,v)$).
Afterwards, each neighbor $v \in N(u)$ is inserted into a priority queue (PQ) with the weight of all nets connecting $v$ to the actual
partial mapping as rating. In each step, the algorithm repeatedly (i) extracts a node $u \in V$ from the PQ,
(ii) assigns it to a \emph{free} block with the highest gain, and (iii) updates the ratings of all neighbors $v \in N(u)$ in the PQ.
A block is free if no node has been previously assigned to it.
The complexity of the algorithm is dominated by the second step, which is
$\Oh{k\sum_{u \in V} |\adjblocks(u)||\incnets(u)|} = \Oh{k^2p}$.
We compute several initial mappings in parallel using each node $u \in V$ as seed node and take the best out of all runs as solution.

\paragraph{Local Search.}
We implement the Kerninghan-Lin algorithm~\cite{KL,HYPERGRAPH-KL} to improve an initial mapping.
The algorithm proceeds in passes. At the beginning of each pass, all node pairs are inserted into a PQ
with the gain of exchanging their block IDs as rating. This can be done in $\Oh{k^2(p + \log{k})}$ time.
The algorithm then repeatedly extracts a node pair $(u,v)$ from the PQ and applies the exchange operation
on the mapping. Before applying the operation, we recompute the exchange gain
and reinsert $(u,v)$ into the PQ when the recomputed gain differs from the gain in the PQ. This approach lazily updates
the gain values in the PQ. We ignore $(u,v)$, if either $u$ or $v$ has already been moved before.
The algorithm also performs exchange operations that intermediately worsen $\osteiner$ and
is therefore able to escape from local optima. At the end, we revert to the best seen solution during the pass.
We perform several passes until no further improvement is possible.

If the recomputed gain always matches the gain in the PQ, the running time of a pass is $\Oh{p + k^2\log{k}}$.
The algorithm extracts $k^2$ node pairs from the PQ, but applies only $\nicefrac{k}{2}$ on the mapping for which we
recompute the gain in $\Oh{|\incnets(u)| + |\incnets(v)|}$ time.
In the worst case, the gains of all node pairs are updated in each step, resulting in a running
time of $\Oh{k^3(p + \log{k})}$.

\section{Label Propagation Refinement}\label{sec:lp}

The most widely used refinement technique in parallel partitioning
algorithms is label propagation~\cite{JOSTLE,PARMETIS,PARHIP,MT-KAHIP,KAMINPAR,PARKWAY-2,BIPART}.
The algorithm works in rounds. In each round, it iterates over all nodes in parallel, and whenever it
visits a node $u$, it moves it to the block $V_t$ maximizing its move gain $\gain{u}{t}$.
The algorithm only performs moves with postive gain and therefore cannot escape from local optima.

A fundamental challenge for parallel refinement algorithms is that the gain of a node move can change between its
initial calculation and actual execution due to concurrent node moves in its neighborhood~\cite{PARMETIS}. We therefore
double-check the gain of a node move by comparing it to an \emph{attributed gain} value that we compute
based on syncronized data structure updates~\cite{MT-KAHYPAR-D,MT-KAHYPAR-JOURNAL}. If we move a node $u$ from
$V_s$ to $V_t$, we iterate over all nets $e \in \incnets(u)$ and update $\pinsinpart(e,V_s)$, $\pinsinpart(e,V_t)$,
and $\conset(e)$. This is done while holding a spin lock for net $e$. The updated values are used to compute
the attributed gain $\attrgain$. If $\attrgain < 0$, we immediately revert the node move as we assume
that it has worsen the solution quality.

\section{FM Local Search}\label{sec:fm}

The Fiduccia-Mattheyses (FM) algorithm~\cite{FM} is the most widely used local search technique in sequential partitioning
algorithms~\cite{KAHYPAR-JOURNAL,KAFFPA,PATOH,HMETIS}. The algorithm works similar to the Kerninghan-Lin algorithm but
performs one node move at a time instead of exchanging node pairs. Our parallel version of the algorithm is based on
the \emph{localized multi-try FM} algorithm of Sanders and Schulz~\cite{KAFFPA,MT-KAHIP}.

\paragraph{Algorithm Overview.}
The parallel FM implementation proceeds in several rounds (at most $10$).
In each round, all boundary nodes are inserted into a globally shared task queue $Q$. The threads then poll a fixed number
of nodes (= 25) from $Q$ that they use as seed nodes for their \emph{localized FM} searches. The localized FM searches initializes
a PQ storing the move with the highest gain for each seed node. The algorithm then repeatedly extracts a node move from the PQ,
applies it to the mapping, and inserts or updates the gain of all neighbors of the moved node in the PQ.
The localized FM algorithm proceeds until the PQ is empty
or it becomes unlikely to find further improvements~\cite{kaspar}.
At the end, the algorithm reverts to the best seen solution.
We repeatedly start localized FM searches until the task queue $Q$ is empty.
In each round, each node is moved at most once.

The localized FM searches use a globally shared \emph{gain table} to lookup gain values in constant time.
The gain table stores and maintains the gain values for all possible node moves ($nk$ entries).
We use the gain table to insert new node moves or update gain values in the PQ.
It has been shown that this technique speeds up the FM algorithm significantly in practice~\cite{KAHYPAR-DIS,GOTT-DIS}.
In the following, we present an efficient algorithm to update gain table entries for the Steiner tree metric.

\paragraph{Delta-Gain Updates.}

If we move a node $u$ from block $V_s$ to $V_t$, we iterate over all nets $e \in \incnets(u)$ and decide
based on $\pinsinpart(e, V_s)$ and $\pinsinpart(e,V_t)$ if the gain value of a pin $v \in e$ has changed.
The input for the \emph{delta-gain update} procedure is a copy of the pin count values and connectivity set obtained from
the synchronized data structure updates described in the previous section.

The contribution of net $e$ to the gain value of a pin $v \in e$ can only change if moving $u$
removes $V_s$ or adds $V_t$ to $\conset(e)$ (i.e., $\pinsinpart(e,V_s) = 0$ or $\pinsinpart(e,V_t) = 1$),
or when moving $v$ will now remove $V_s$ (i.e., $\pinsinpart(e,V_s) = 1$ and $\Partition[v] = V_s$), or does not remove $V_t$
from $\conset(e)$ anymore (i.e., $\pinsinpart(e,V_t) = 2$ and $\Partition[v] = V_t$).
Thus, there are four cases that trigger a gain update.
In the former two cases, the gain values of all pins $v \in e$ are affected. In the latter two cases, only
the gain value of the (previously) last pin $v \in e$ in block ($V_t$) $V_s$ changes.

\begin{figure}[t!]
	\centering
	\includegraphics[width=0.49\textwidth]{}
  \vspace{-0.5cm}
	\caption{After moving the red node from $V_4$ to $V_3$, moving the blue node from $V_2$ to $V_1$ does not
           reduce the weight of the minimal Steiner tree (red edges) anymore.}\label{fig:delta_gain_update}
  \vspace{-0.5cm}
\end{figure}

For the connectivity metric $\ocon$, only the gain value for moving $v$ to $V_s$ or $V_t$ can change if $v \in e$ is affected by a gain update~\cite{KAHYPAR-JOURNAL}.
Unfortunately, this is not the case for $\osteiner$ which we illustrate with an example in Figure~\ref{fig:delta_gain_update}.
Thus, we have to update all $k$ gain table entries of $v$.

If we identify an entry $\gain{v}{j}$ affected by a gain update, we subtract the previous (before moving $u$) and add the actual contribution
(after moving $u$) of net $e$ to the gain value $\gain{v}{j}$.
We compute the previous contribution by reconstructing the pin count values and connectivity set of net $e$ before the node move.
This can be done in constant time by incrementing $\pinsinpart(e,V_s)$ and decrementing $\pinsinpart(e,V_t)$, and adding $V_s$ or removing
$V_t$ from $\conset(e)$ accordingly.

\begin{lemma}
The running time of gain updates for moving all nodes once is $\Oh{k^2p}$.
\end{lemma}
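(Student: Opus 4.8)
The plan is to amortize the cost of all gain-table updates over a full round in which every node is moved once, rather than bounding the cost of a single move in isolation. First I would fix a round and consider the sequence of node moves it performs; for each moved node $u$ (moving from $V_s$ to $V_t$) the delta-gain procedure iterates over the incident nets $e \in \incnets(u)$ and, by the case analysis preceding the lemma, each such net $e$ triggers at most one of four update types. I would charge the work of processing net $e$ during the move of $u$ to the pair $(u,e)$, and bound the total number of such pairs over the whole round: since each node is moved at most once, $\sum_{u} |\incnets(u)| = \sum_u d(u) = p$, so there are at most $p$ such $(u,e)$ incidences across the round.

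Next I would bound the cost per incidence $(u,e)$. In the two ``expensive'' cases ($\pinsinpart(e,V_s) = 0$ or $\pinsinpart(e,V_t) = 1$), the gain values of all pins $v \in e$ are affected, and for each such pin we must update all $k$ of its gain-table entries $\gain{v}{j}$, each in constant time by the reconstruction argument described just before the lemma (incrementing $\pinsinpart(e,V_s)$, decrementing $\pinsinpart(e,V_t)$, adjusting $\conset(e)$, and recomputing the net's contribution via $\deltadist{\cdot}$ in $\Oh{1}$). This costs $\Oh{|e| \cdot k}$ for that incidence. In the two ``cheap'' cases only the single (previously) last pin of $e$ in $V_s$ or $V_t$ is affected, costing $\Oh{k}$. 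So in either case the cost of incidence $(u,e)$ is $\Oh{|e| \cdot k}$.

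Summing over all incidences in the round gives a bound of $\Oh{k \sum_{(u,e)} |e|}$. Since each moved node $u$ contributes the term $|e|$ once for each of its incident nets, and each node is moved at most once, $\sum_{(u,e)} |e| = \sum_{e \in E} |e| \cdot |\{u \in e : u \text{ is moved}\}| \le \sum_{e \in E} |e|^2$. The main obstacle is that $\sum_e |e|^2$ is not $\Oh{p}$ in general, so a naive charging is too weak; the fix is to observe that the expensive case for net $e$ can be triggered at most a constant number of times per round — specifically, ``$V_s$ becomes empty'' or ``$V_t$ becomes newly occupied'' can each happen at most once per block per round since pin counts move monotonically within a round of single-direction moves, but more usefully, the total work can be re-charged: each of the $|e|$ gain-table columns touched is charged to the pair (pin of $e$, net $e$), and summing (pin, net) incidences over all nets is exactly $p$; multiplying by the $\Oh{k}$ entries per pin and the $\Oh{1}$ reconstructions yields $\Oh{k \cdot p}$ — wait, that is only $\Oh{kp}$. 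The extra factor of $k$ comes from the fact that a single net $e$ can have its expensive case triggered on the order of $k$ times over the round (once as each distinct block it touches becomes empty or newly occupied through successive moves), so each (pin, net) incidence can be revisited $\Oh{k}$ times, giving $\Oh{k} \cdot \Oh{k} \cdot p = \Oh{k^2 p}$ after accounting for the $\Oh{1}$ reconstruction cost per entry. I would make this last counting step precise by bounding, for each net $e$, the number of moves in the round that touch $e$ by $\min(|e|, \text{(number of moves)})$ and the number of those that hit an expensive case by $\Oh{\con(e)} = \Oh{k}$, then verify $\sum_e |e| \cdot \Oh{k} \cdot \Oh{k} = \Oh{k^2 p}$; this bookkeeping — disentangling how often each net can enter an expensive case versus how many pins it has — is the delicate part, and everything else is the routine $\Oh{1}$-per-entry reconstruction already justified in the text.
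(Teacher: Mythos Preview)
Your final counting argument is the paper's argument: the expensive update cases for a net $e$ (i.e., $\pinsinpart(e,V_s)=0$ or $\pinsinpart(e,V_t)=1$ after the move) occur $\Oh{1}$ times per (net, block) pair, hence $\Oh{k}$ times per net over the round; each such occurrence touches all $|e|$ pins and, for each pin, all $k$ gain-table entries, so the total work is $\Oh{k\sum_{e} k|e|}=\Oh{k^2p}$. The paper's proof states exactly this, citing Fiduccia and Mattheyses~\cite{FM} for the $\Oh{1}$-per-(net,block) fact, and skips your initial $(u,e)$-incidence detour entirely.

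The gap in your write-up is the justification of that $\Oh{1}$-per-(net,block) claim. Your assertion that ``pin counts move monotonically within a round of single-direction moves'' is false: in a $k$-way FM round a block can both gain and lose pins of the same net, so $\pinsinpart(e,V_i)$ need not be monotone. The correct argument is the classical FM locking argument: since each pin moves at most once and is then locked, for a fixed net $e$ and block $V_i$ the event ``$\pinsinpart(e,V_i)$ drops to $0$'' can happen only once (a pin that entered $V_i$ during the round cannot leave again, so once the last original pin leaves, the count never returns to $0$), and symmetrically ``$\pinsinpart(e,V_i)$ rises to $1$ from $0$'' can happen only once. You should either reproduce this locking argument or, as the paper does, cite~\cite{FM}; without it, your ``delicate bookkeeping'' step is asserted rather than proved.
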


\begin{proof}
The four different update cases are exactly the same as in the original FM algorithm~\cite{FM}.
Fiduccia and Mattheyses showed that they are only triggered a constant number of times per net and block.
The most time consuming update takes $\Oh{k|e|}$ time ($\pinsinpart(e,V_s) = 0$ or $\pinsinpart(e,V_t) = 1$).
Thus, the running time of the gain updates for moving all nodes once is $\Oh{k\sum_{e \in E} k|e|} = \Oh{k^2p}$
\end{proof}

The gain updates for the Steiner tree metric are significantly more expensive compared to the gain updates for the connectivity
metric, which can be performed in $\Oh{kp}$ time~\cite{MT-KAHYPAR-JOURNAL}. As an optimization, we only store and maintain gain values
to the adjacent blocks of a node. This reduces the running time to $\Oh{k\avgadjblocks p}$ where
$\avgadjblocks$ is the average number of adjacent blocks per node. On real-world hypergraphs, we observed running time
much closer to $\Oh{\avgadjblocks p}$ since many nets have small size or connectivity.

\section{Flow-Based Refinement}\label{sec:flows}
Flow-based refinement is considered as the most powerful improvement heuristic for (hyper)graph partitioning and is used
in the highest-quality partitioning algorithms~\cite{KAHYPAR-JOURNAL,KAFFPA,MT-KAHYPAR-FLOWS}.
The technique is based on the well-known max-flow min-cut theorem~\cite{MINCUT-THEOREM} that relates the minimum cut separating two nodes $\source$
and $\sink$ of a graph $G$ to the maximum flow between $\source$ and $\sink$ in the corresponding flow network of $G$.

The flow-based refinement algorithm of \Partitioner{Mt-KaHyPar} works on bipartitions and is
scheduled on different block pairs in parallel to improve a $k$-way partition $\Partition$. To improve a bipartition,
the algorithm extracts a region $B := B_1~\cup~B_2$ with $B_1 \subseteq V_1$ and $B_2 \subseteq V_2$ around the cut nets
of a bipartition $\{V_1,V_2\} \subseteq \Partition$. This yields a flow network $\flownetwork_B$ induced by the region $B$.
The algorithm then uses incremental maximum flow computations on $\flownetwork_B$ to compute a balanced minimum cut that improves $\Partition$.

Max-flow min-cut computations can be used to improve the cut-net metric $\ocut$ on $\flownetwork_B$.
However, we can improve the Steiner tree metric $\osteiner$ on the input hypergraph $H$
with flow computations on $\flownetwork_B$ if the constructed flow network $\flownetwork_B$ satisfies the following equation:

\begin{equation}
\osteiner(\Partition') = \osteiner(\Partition) + \Delta \ocut(\flownetwork_B, \Partition \rightarrow \Partition')
\label{eq:flows}
\end{equation}

Here, $\Delta \ocut(\flownetwork_B, \Partition \rightarrow \Partition')$ is the difference in the cut-net metric
if the actual partition $\Partition$ changes to $\Partition'$ on $\flownetwork_B$.
Note that only nodes contained in $B$ are allowed to change their blocks.
In the following, we show how to construct $\flownetwork_B$ for plain graphs and then generalize the construction
to hypergraphs.

\paragraph{Graph Model.}
Our flow network $\flownetwork_B = (\flownodeset := \{\source,\sink\} \cup B, \flowedgeset, \capacity)$ is a graph with
a dedicated source $\source \in \flownodeset$ and sink $\sink \in \flownodeset$ in which each edge $e \in \flowedgeset$
has capacity $\capacity(e) \ge 0$. Recall that $\dist(\conset(e))$ reverts to shortest distance between the blocks
connected by edge $e$ in $\targetarchgraph$ for plain graphs. The following construction algorithm is illustrated in Figure~\ref{fig:flow_network}.

For an input graph $G$, we obtain $\flownetwork_B$ from the subgraph $\subgraph{V_1\cup V_2}$ induced by the bipartition $\{V_1,V_2\} \subseteq \Partition$
by contracting all nodes in $V_1 \setminus B$ to the source $\source$ and all nodes in $V_2 \setminus B$ to the sink $\sink$.
We set the capacity of each edge $e \in \flowedgeset$ to $\capacity(e) := \dist(\{V_1,V_2\})\cdot \omega(e)$ (blue edges in Figure~\ref{fig:flow_network}).
If a non-cut edge of $\Partition$ becomes a cut edge in $\Partition'$, then the cut-net metric increases by $\dist(\{V_1,V_2\})\cdot \omega(e)$
on $\flownetwork_B$ which is also the increase in the Steiner tree metric on the input graph $G$ (vice versa if we remove an edge from the cut).
However, this does not suffice to satisfy Equation~\ref{eq:flows}.

We have to add additional edges to $\flownetwork_B$ for each edge $e = \{u,v\}$ of $G$ that w.l.o.g.~connect $u \in B$ to a block $\nodeblock{v} \notin \{V_1,V_2\}$
as follows:
\begin{enumerate}
  \item If $u \in V_1$ and moving $u$ from $V_1$ to $V_2$ worsens resp.~improves $\osteiner$ on $G$, we add an edge $e' = \{s,u\}$ resp.~$e' = \{u,t\}$ to $\flowedgeset$
        with capacity $\capacity(e') = |\deltadist{\conset(e) \setminus \{V_1\} \cup \{V_2\}}|\cdot \omega(e)$ (green edges in Figure~\ref{fig:flow_network}).
  \item If $u \in V_2$ and moving $u$ from $V_2$ to $V_1$ worsens resp.~improves $\osteiner$ on $G$, we add an edge $e' = \{u,t\}$ resp.~$e' = \{s,u\}$ to $\flowedgeset$
        with capacity $\capacity(e') = |\deltadist{\conset(e) \setminus \{V_2\} \cup \{V_1\}}|\cdot \omega(e)$ (red edges in Figure~\ref{fig:flow_network}).
\end{enumerate}

The capacities are chosen to accurately reflect the difference in $\osteiner$ on $G$ if $u$ is moved to the other block.
If, e.g., moving $u \in V_1$ to $V_2$ improves $\osteiner$ on $G$, the node move would remove the edge $(u,t)$ in $\flownetwork_B$
from the cut. The difference in $\ocut$ on $\flownetwork_B$ is then the same as the difference in $\osteiner$ on $G$
(analogously for all other cases).

\begin{figure}[t!]
	\centering
	\includegraphics[width=0.49\textwidth]{}
  \vspace{-0.5cm}
	\caption{The flow network $\flownetwork_B$ induced by a region $B := B_1 \cup B_2$ constructed for a target
           graph $\targetarchgraph$ with six nodes.}\label{fig:flow_network}
  \vspace{-0.3cm}
\end{figure}

\paragraph{Hypergraph Model.}
For an input hypergraph $H$ and a bipartition $\{V_1, V_2\} \subseteq \Partition$, we construct $\flownetwork_B$ similarly to our graph model.
However, the main difference is how we model nets $e$ of $H$ that connect $B$ to a block different from $V_1$ and $V_2$ ($\conset(e) \not\subset \{V_1,V_2\}$).
For such nets, we distingush between two cases: $|e \cap B| = 1$ and $|e \cap B| > 1$.
In the case of $|e \cap B| = 1$, where only one pin $u$ of $e$ is contained in $\flownetwork_B$,
we can use the same approach as in our graph model. If $|e \cap B| > 1$, then
the connectivity set $\conset(e)$ of $e$ can be in one of the following three states if the actual
partition $\Partition$ changes to $\Partition'$: $\conset(e) \setminus \{V_1\} \cup \{V_2\}$,
$\conset(e) \setminus \{V_2\} \cup \{V_1\}$, or $\conset(e) \cup \{V_1, V_2\}$.
This induces three different Steiner trees on $\targetarchgraph$. Thus, we cannot model
the difference in the Steiner tree metric with a single capacity on the nets in $\flownetwork_B$.
We therefore set the capacity of such nets to a lower bound for the actual improvement,
which is
\begin{align*}
  \min( & \deltadist{\conset(e) \setminus \{V_1\} \cup \{V_2\} }, \\
        & \deltadist{\conset(e) \setminus \{V_2\} \cup \{V_1\}})\cdot \omega(e)
\end{align*}
if $\{V_1,V_2\} \subset \conset(e)$, and $\deltadist{\conset(e) \cup \{V_1, V_2\}} \cdot \omega(e)$ if $\{V_1, V_2\} \not\subset \conset(e)$.
Consequently, $\ocut(\flownetwork_B, \Partition \rightarrow \Partition')$ is a lower bound for the actual improvement in the Steiner tree metric
in our hypergraph flow network model. Note that our maximum flow algorithm only works on graph flow networks.
We therefore convert the hypergraph flow network into an equivalent graph model using the \emph{Lawler expansion}~\cite{Lawler}.

\section{Experiments}\label{sec:experiments}

We integrated the multilevel algorithm for optimizing the Steiner tree metric in the shared-memory hypergraph partitioner
\Partitioner{Mt-KaHyPar}\footnote{The code is publicly available from \anonymousurl{https://github.com/kahypar/mt-kahypar}{https://tinyurl.com/fu6sfzp8}}~\cite{MT-KAHYPAR-JOURNAL},
which is implemented in \texttt{C++17}, parallelized using the TBB library~\cite{TBB}, and compiled using \gpp{10.2} with the
flags \texttt{-O3 -mtune=native -march=native}.

\paragraph{Algorithm Configuration.}
We provide two configurations of our algorithm: \Partitioner{Mt-KaHyPar-D} (\textbf{D}efault setting)
and \Partitioner{Mt-KaHyPar-Q} (\textbf{Q}uality setting). \Partitioner{Mt-KaHyPar-Q} extends \Partitioner{Mt-KaHyPar-D}
with flow-based refinement. Moreover, \Partitioner{Mt-KaHyPar-Q} runs all refinement algorithms on each level multiple times
in combination and stops if the relative improvement is less than $0.25\%$. In both configurations, we precompute all Steiner
trees for terminal sets with up to $t = 4$ nodes. We use the concurrent hash table \Partitioner{growt}~\cite{GROWT} to cache
MST computations for nets $e$ with $\con(e) > t$.

\paragraph{Competitors.}
There exists no publicly available algorithm for optimizing $\osteiner$.
We therefore compare \Partitioner{Mt-KaHyPar-D/-Q} to two-phase approaches (2P).
We first compute a $k$-way partition using a hypergraph
partitioning algorithm optimizing the connectivity metric $\ocon$. Subsequently, we contract each
block of the partition to obtain an OPMP, which we solve with our initial mapping algorithm presented
in Section~\ref{sec:initial_mapping}.

We compare \Partitioner{Mt-KaHyPar-D/-Q} to the direct $k$-way version \Partitioner{$k$KaHyPar$_{\text{2P}}$} of
\Partitioner{KaHyPar}~\cite{KAHYPAR-JOURNAL} (highest-quality sequential algorithm), the recursive bipartitioning
version \Partitioner{hMetis-R$_{\text{2P}}$} of \Partitioner{hMetis $2.0$}~\cite{HMETIS} (most widely-used algorithm), as well as
to the connectivity optimization code \Partitioner{Mt-KaHyPar-D$_{\text{2P}}$} (fastest partitioning algorithm) and
\Partitioner{Mt-KaHyPar-Q$_{\text{2P}}$} (comparable quality to \Partitioner{KaHyPar}) of \Partitioner{Mt-KaHyPar}~\cite{MT-KAHYPAR-JOURNAL}.
This algorithm selection is based on a recent study on the time-quality trade-off of partitioning algorithms~\cite{MT-KAHYPAR-JOURNAL}.

\begin{figure*}[!ht]
  \centering
  \ifpdfplots
    \includegraphics{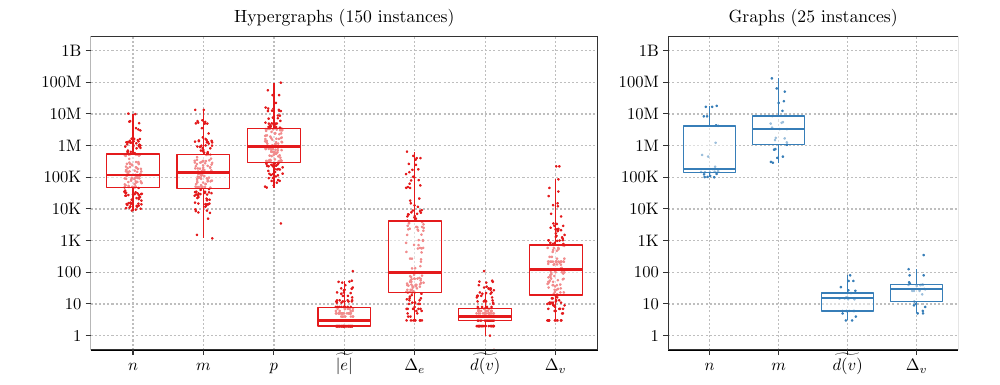}
  \else
    \tikzsetnextfilename{pdf_plots/instances}%
    \input{tikz_plots/instances}%
  \fi
  \vspace{-0.25cm}
  \caption{Summary of different properties for our hypergraph (left, $150$ instances) and graph benchmark set (right, $25$ instances).
           It shows for each (hyper)graph (points), the number of nodes $n$, nets/edges $m$ and pins $p$, as well as the median and maximum net size ($\medsize$ and $\maxsize{e}$),
           and node degree ($\meddeg$ and $\maxdeg{v}$). }
	\label{fig:instances}
  \vspace{-0.1cm}
\end{figure*}

\paragraph{Benchmark Set.}
Our benchmark set is composed of hypergraphs from two well-established benchmark sets of Heuer and Schlag~\cite{KAHYPAR-CA} ($488$ hypergraphs)
and the Titan23 Benchmark Suite~\cite{TITAN23} (\VLSI, $22$ hypergraphs).
The hypergraphs of the former are derived from four sources encompassing three application domains:
the ISPD98 VLSI Circuit Benchmark Suite~\cite{ISPD98} (\VLSI, $18$ hypergraphs),
the DAC 2012 Routability-Driven Placement Contest~\cite{DAC} (\VLSI, $10$ hypergraphs),
the SuiteSparse Matrix Collection~\cite{SPM} (\SPM, $184$ hypergraphs), and the International SAT Competition 2014 \cite{SAT14} (\SAT, $276$ hypergraphs).
We include all $50$ \VLSI~instances and randomly sampled $50$ \SPM~and \SAT~instances each into our benchmark set - resulting in a total of $150$ hypergraphs.
Detailed statistics of different properties for our benchmark instances can be found in Figure~\ref{fig:instances} (left)\footnote{We
made all benchmark instances and experimental results publicly available from \anonymousurl{https://algo2.iti.kit.edu/heuer/alenex23/}{https://tinyurl.com/2ysd7ey5}}.
The largest hypergraph of the benchmark set contains $\approx 100$ million pins.

\paragraph{Experimental Setup.}
In our experiments, we map each hypergraph onto a \emph{complete rectangular $N \times M$ grid graph} $\targetarchgraph$ (similar to global routing graphs).
We construct $\targetarchgraph$ by drawing $N$ vertical and $M$ horizontal lines in a plane. The nodes are the intersections
of the lines and the edges are the connections between the intersections. The edge weights are choosen uniformly at random
between $1$ and $10$. We map each hypergraph five times using different random seeds onto
a grid graph with $k \in \{4 (2 \times 2), 8 (4 \times 2), 16 (4 \times 4), 32 (8 \times 4), 64 (8 \times 8)\}$ nodes with an allowed
imbalance of $\varepsilon = 3\%$ and a time limit of eight hours.
For each hypergraph and $k$, we use different edge weights in the corresponding grid graph.
The experiments run on a cluster of Intel Xeon Gold 6230 processors
(2 sockets with 20 cores each) running at $2.1$ GHz with 96GB RAM.
We run the different configurations of \Partitioner{Mt-KaHyPar} using ten threads.

\paragraph{Aggregating Performance Numbers.}
We call a (hyper)graph partitioned into $k$ blocks an \emph{instance}. For each instance, we aggregate running times
and the Steiner tree metric using the arithmetic mean over all seeds. To further aggregate over multiple instances, we use the geometric mean.
If all runs of an algorithm produced an imbalanced mapping (marked with \ding{55} in the plots) or ran into the time limit (marked with \ClockLogo) on
an instance, we consider the solution as \emph{infeasible}. Runs with imbalanced partitions are not excluded from aggregated running times.
For runs that exceeded the time limit, we use the time limit itself in the aggregates.

\paragraph{Performance Profiles.}
\emph{Performance profiles} can be used to compare the solution quality of different algorithms~\cite{PERFORMANCE-PROFILES}.
Let $\mathcal{A}$ be the set of all algorithms, $\mathcal{I}$ the set of instances, and $q_{A}(I)$ the quality of algorithm
$A \in \mathcal{A}$ on instance $I \in \mathcal{I}$.
For each algorithm $A$, performance profiles show the fraction of instances ($y$-axis) for which $q_A(I) \leq \tau \cdot \text{Best}(I)$, where $\tau$ is on the $x$-axis
and $\text{Best}(I)$ is the best solution produced by any of the algorithms for instance $I$.
For $\tau = 1$, the $y$-value indicates the percentage of instances for which an algorithm performs best.
Achieving higher fractions at smaller $\tau$ values is considered better.
The \ding{55}-~and \ClockLogo-tick indicates the fraction of instances for which \emph{all} runs of that algorithm produced
an imbalanced mapping or timed out.
Note that these plots relate the quality of an algorithm to the best solution and thus do not permit a full ranking of
three or more algorithms.

\begin{figure*}[!t]
  \centering
  \begin{minipage}{\textwidth}
  \ifpdfplots
    \includegraphics{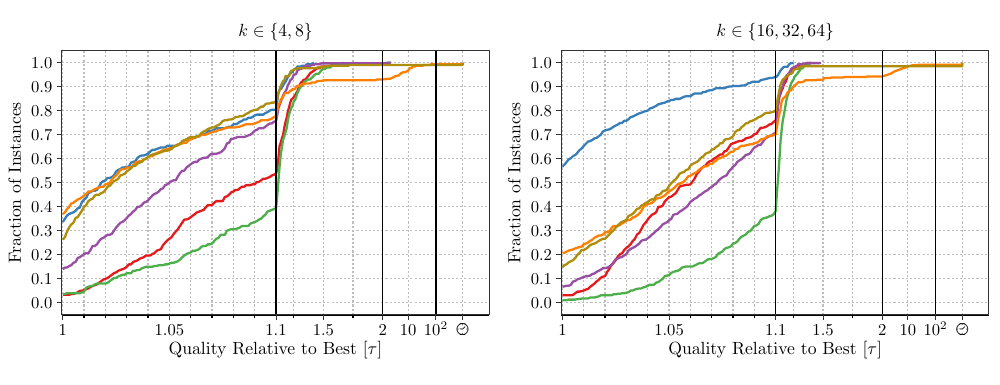}
  \else
    \tikzsetnextfilename{pdf_plots/hgp_comparison}%
    \input{tikz_plots/hgp_comparison}%
  \fi
  \end{minipage} %
  \begin{minipage}{\textwidth}
    \vspace{-0.25cm}
    \centering
  \ifpdfplots
    \includegraphics{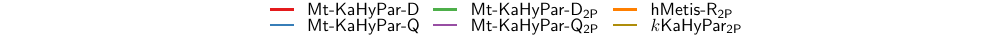}
  \else
    \tikzsetnextfilename{pdf_plots/hgp_comparison_legend}%
    \input{tikz_plots/hgp_comparison_legend}%
  \fi
  \end{minipage} %
  \vspace{-0.75cm}
  \caption{Performance profiles comparing the solution quality ($\osteiner$) of \Partitioner{Mt-KaHyPar-D/-Q} to the different two-phase approaches for
           small (left, $k \le 8$) and medium-sized target graphs (right, $k \ge 16$).}
	\label{fig:hgp_comparison}
  \vspace{-0.35cm}
\end{figure*}

\begin{figure*}[!ht]
  \centering
  \begin{minipage}{0.49\textwidth}
  \ifpdfplots
    \includegraphics{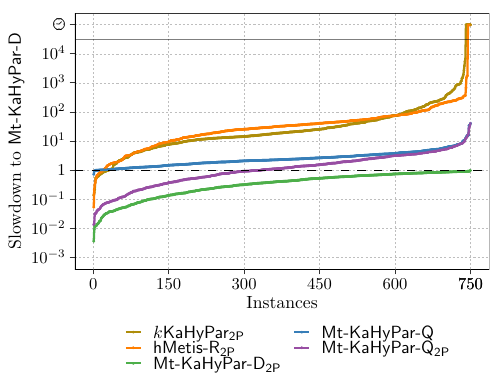}
  \else
    \tikzsetnextfilename{pdf_plots/relative_running_times_hgp}%
    \input{tikz_plots/relative_running_times_hgp}%
  \fi
  \end{minipage} %
  \begin{minipage}{0.49\textwidth}
    \centering
  \ifpdfplots
    \includegraphics{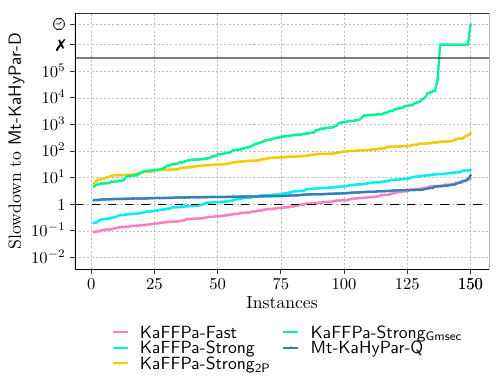}
  \else
    \tikzsetnextfilename{pdf_plots/relative_running_times_graphs}%
    \input{tikz_plots/relative_running_times_graphs}%
  \fi
  \end{minipage} %
  \vspace{-0.75cm}
  \caption{Running times of the differrent algorithms for Steiner tree metric optimization (left) and
           hierarchical process mapping (right) relative to \Partitioner{Mt-KaHyPar-D}.}
	\label{fig:relative_running_times}
\end{figure*}

\begin{figure*}[!ht]
  \centering
  \begin{minipage}{\textwidth}
  \ifpdfplots
    \includegraphics{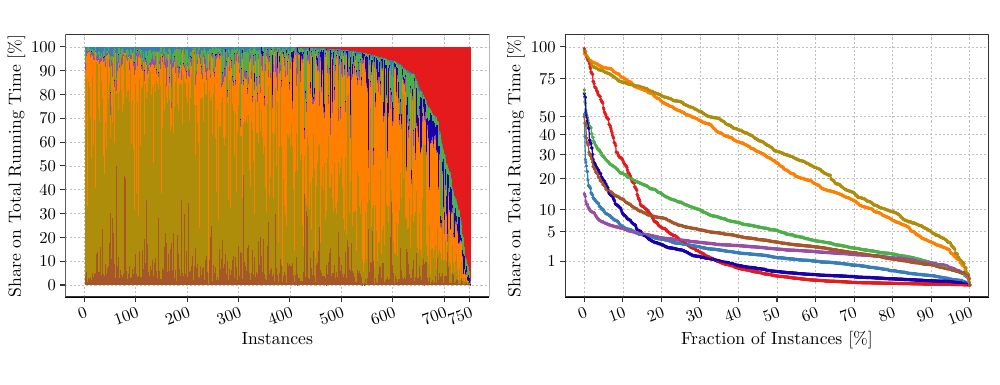}
  \else
    \tikzsetnextfilename{pdf_plots/component_breakdown}%
    \input{tikz_plots/component_breakdown}%
  \fi
  \end{minipage} %
  \begin{minipage}{\textwidth}
    \vspace{-0.25cm}
    \centering
  \ifpdfplots
    \includegraphics{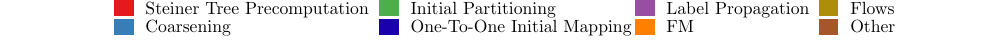}
  \else
    \tikzsetnextfilename{pdf_plots/component_breakdown_legend}%
    \input{tikz_plots/component_breakdown_legend}%
  \fi
  \end{minipage} %
  \vspace{-0.5cm}
  \caption{Running time shares of the different algorithmic components on the total execution time of \Partitioner{Mt-KaHyPar-Q}.
           The left plot shows the running time shares of each component for each instance and the right plot shows the fraction of the instances
           (x-axis) for which the share of a component on the total execution time is $\ge y\%$.}
	\label{fig:component_breakdown}
  \vspace{-0.1cm}
\end{figure*}

\begin{figure}[!t]
  \centering
  \ifpdfplots
    \includegraphics{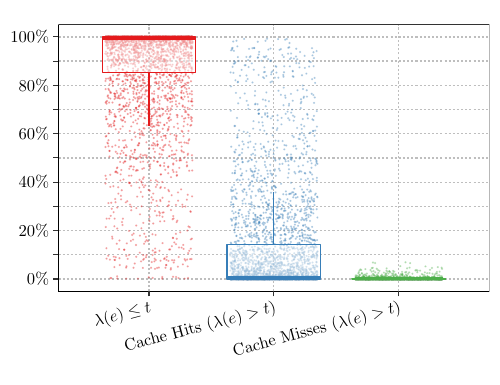}
  \else
    \tikzsetnextfilename{pdf_plots/steiner_tree_stats}%
    \input{tikz_plots/steiner_tree_stats}%
  \fi
  \vspace{-1cm}
  \caption{The percentage of Steiner tree queries in \Partitioner{Mt-KaHyPar-Q}
           that access a minimal Steiner tree ($\con(e) \le t$)
           or a $2$-approximation ($\con(e) > t$) for each instance (points). For $2$-approximations,
           we show the percentage of cache hits (accesses a cached MST) and misses (computes an MST).}
	\label{fig:steiner_tree_stats}
  \vspace{-0.25cm}
\end{figure}

\DTLloaddb[noheader, keys={key,value}]{hgp}{data/hgp_comparison.dat}

\paragraph{Steiner Tree Metric Optimization.}
Figure~\ref{fig:hgp_comparison} compares the mapping quality ($\osteiner$) produced by the
different algorithms. As we can see, two-phase approaches produce similar mappings when compared to our new approach for smaller
target graphs ($k \in \{4,8\}$). We therefore focus on the results for larger target graphs ($k \in \{16,32,64\}$).
For these graphs, \Partitioner{Mt-KaHyPar-Q} finds the best mappings on most of the instances ($\placeholder{hgp}{best_partitions_for_large_mt_kahypar_d_f}$\%)
followed by \Partitioner{hMetis-R$_{2P}$} ($\placeholder{hgp}{best_partitions_for_large_hmetis_r_2p}$\%) and
\Partitioner{$k$KaHyPar$_{2P}$} ($\placeholder{hgp}{best_partitions_for_large_kahypar_k_2p}$\%).
The median improvement of \Partitioner{Mt-KaHyPar-Q} over \Partitioner{Mt-KaHyPar-D$_{2P}$}, \Partitioner{Mt-KaHyPar-Q$_{2P}$},
\Partitioner{hMetis-R$_{2P}$}, \Partitioner{Mt-KaHyPar-D}, and \Partitioner{$k$KaHyPar$_{2P}$} is
$\placeholder{hgp}{median_improvement_mt_kahypar_q_vs_d_2p}$\%, $\placeholder{hgp}{median_improvement_mt_kahypar_q_vs_q_2p}$\%,
$\placeholder{hgp}{median_improvement_mt_kahypar_q_vs_hmetis_r_2p}$\%, $\placeholder{hgp}{median_improvement_mt_kahypar_q_vs_d}$\%, and
$\placeholder{hgp}{median_improvement_mt_kahypar_q_vs_kahypar_k_2p}$\%.
Our default configuration computes better mappings than \Partitioner{hMetis-R$_{2P}$} by $\placeholder{hgp}{median_improvement_mt_kahypar_d_vs_hmetis_r_2p}$\% in the median,
while \Partitioner{$k$KaHyPar$_{2P}$} produces slightly better results than \Partitioner{Mt-KaHyPar-D}
($\placeholder{hgp}{median_improvement_mt_kahypar_d_vs_kahypar_k_2p}$\% in the median).
Table~\ref{tbl:improvements} summarizes median improvements of \Partitioner{Mt-KaHyPar-Q} over \Partitioner{$k$KaHyPar$_{2P}$}
for the different instance types and target graphs ($k$). It can be seen that larger target graphs lead to
larger improvements, while improvements are most pronounced on \VLSI~instances.

\begin{table}[!t]
  \centering
  \caption{Median improvements in percentage of \Partitioner{Mt-KaHyPar-Q} over \Partitioner{$k$KaHyPar$_{2P}$} for the different
           instance types and $k$.}\label{tbl:improvements}
  \vspace{0.1cm}
  \begin{tabular}{rccccc}
    & $k = 4$ & $k = 8$ & $k = 16$ & $k = 32$ & $k = 64$ \\
  \midrule
  \VLSI & $\placeholder{hgp}{median_improvement_mt_kahypar_q_vs_kahypar_k_2p_k4_vlsi}$ & $\placeholder{hgp}{median_improvement_mt_kahypar_q_vs_kahypar_k_2p_k8_vlsi}$ &
          $\placeholder{hgp}{median_improvement_mt_kahypar_q_vs_kahypar_k_2p_k16_vlsi}$ & $\placeholder{hgp}{median_improvement_mt_kahypar_q_vs_kahypar_k_2p_k32_vlsi}$ &
          $\placeholder{hgp}{median_improvement_mt_kahypar_q_vs_kahypar_k_2p_k64_vlsi}$ \\
  \SPM & $\placeholder{hgp}{median_improvement_mt_kahypar_q_vs_kahypar_k_2p_k4_spm}$ & $\placeholder{hgp}{median_improvement_mt_kahypar_q_vs_kahypar_k_2p_k8_spm}$ &
         $\placeholder{hgp}{median_improvement_mt_kahypar_q_vs_kahypar_k_2p_k16_spm}$ & $\placeholder{hgp}{median_improvement_mt_kahypar_q_vs_kahypar_k_2p_k32_spm}$ &
         $\placeholder{hgp}{median_improvement_mt_kahypar_q_vs_kahypar_k_2p_k64_spm}$ \\
  \SAT & $\placeholder{hgp}{median_improvement_mt_kahypar_q_vs_kahypar_k_2p_k4_sat}$ & $\placeholder{hgp}{median_improvement_mt_kahypar_q_vs_kahypar_k_2p_k8_sat}$ &
         $\placeholder{hgp}{median_improvement_mt_kahypar_q_vs_kahypar_k_2p_k16_sat}$ & $\placeholder{hgp}{median_improvement_mt_kahypar_q_vs_kahypar_k_2p_k32_sat}$ &
         $\placeholder{hgp}{median_improvement_mt_kahypar_q_vs_kahypar_k_2p_k64_sat}$ \\
  \midrule
  \ALL & $\placeholder{hgp}{median_improvement_mt_kahypar_q_vs_kahypar_k_2p_k4_all}$ & $\placeholder{hgp}{median_improvement_mt_kahypar_q_vs_kahypar_k_2p_k8_all}$ &
         $\placeholder{hgp}{median_improvement_mt_kahypar_q_vs_kahypar_k_2p_k16_all}$ & $\placeholder{hgp}{median_improvement_mt_kahypar_q_vs_kahypar_k_2p_k32_all}$ &
         $\placeholder{hgp}{median_improvement_mt_kahypar_q_vs_kahypar_k_2p_k64_all}$ \\
  \end{tabular}
  \vspace{-0.6cm}
\end{table}

The different algorithms can be ranked from fastest to slowest as follows:
\Partitioner{Mt-KaHyPar-D$_{2P}$} (\gmeantime~$\placeholder{hgp}{gmean_time_mt_kahypar_d_2p}$s),
\Partitioner{Mt-KaHyPar-D} ($\placeholder{hgp}{gmean_time_mt_kahypar_d}$s),
\Partitioner{Mt-KaHyPar-Q$_{2P}$} ($\placeholder{hgp}{gmean_time_mt_kahypar_q_2p}$s),
\Partitioner{Mt-KaHyPar-Q} ($\placeholder{hgp}{gmean_time_mt_kahypar_q}$s),
\Partitioner{$k$KaHyPar$_{2P}$} ($\placeholder{hgp}{gmean_time_kahypar_k_2p}$s), and
\Partitioner{hMetis-R$_{2P}$} ($\placeholder{hgp}{gmean_time_hmetis_r_2p}$s).
As we can see, optimizing the Steiner tree metric introduces a slowdown by a factor of
$\placeholder{hgp}{slowdown_factor_mt_kahypar_d_vs_d_2p}$ (\Partitioner{Mt-KaHyPar-D}) and
$\placeholder{hgp}{slowdown_factor_mt_kahypar_q_vs_q_2p}$ (\Partitioner{Mt-KaHyPar-Q}) on average
when compared to their corresponding two-phase configurations that optimizes the connectivity metirc.
However, our best configuration \Partitioner{Mt-KaHyPar-Q} is almost an order of magnitude
faster than \Partitioner{$k$KaHyPar$_{2P}$} when using ten threads.
In Figure~\ref{fig:relative_running_times} (left),
we show the running times of the different algorithms relative to \Partitioner{Mt-KaHyPar-D}
for each instance.

Figure~\ref{fig:component_breakdown} shows running time shares
of the different algorithmic components on the total execution time of \Partitioner{Mt-KaHyPar-Q}.
The most time-consuming components are FM ($26.3\%$ in the median) and flow-based refinement ($31.5\%$).
The Steiner tree precomputation step takes the most time on $8.5\%$ of the instances (primarily on smaller
hypergraphs mapped onto larger target graphs).

\DTLloaddb[noheader, keys={key,value}]{stats}{data/steiner_tree_stats.dat}

\paragraph{Accuracy of $\osteiner$.}
Figure~\ref{fig:steiner_tree_stats} shows that most of the Steiner tree queries access a
minimal Steiner tree ($\con(e) \le t$, $\placeholder{stats}{median_opt_steiner_tree}\%$ in the median).
For queries with $\con(e) > t$ ($2$-approximation, $\placeholder{stats}{median_cache_hits}\%$),
we return a cached MST for $\placeholder{stats}{median_cache_hits_2}\%$ (median) of the queries.
This confirms our choice for $t$ and demonstrates the effectiveness of the caching mechanism.
However, larger cache miss rates can be observed on instances with large hyperedges (primarily \SPM~instances)
mapped onto larger target graphs (largest cache miss rate is $\placeholder{stats}{max_cache_misses}\%$).

For the final mappings generated by \Partitioner{Mt-KaHyPar-Q}, we are able to compute the optimal value of $\osteiner$ for
$\placeholder{stats}{opt_steiner_tree_metric}\%$ of instances. Moreover, we achieve a approximation ratio within $1\%$ of the optimal
value of $\osteiner$ for $\placeholder{stats}{one_percent_steiner_tree_metric}\%$ of the instances, meaning that we use a $2$-approximation for at most
$1\%$ of the nets.

\DTLloaddb[noheader, keys={key,value}]{graph}{data/graph_comparison.dat}

\paragraph{Hierarchical Process Mapping.}
Mapping graphs onto hierarchical processor architectures is a field that has seen
a lot of progress recently~\cite{KAFFPA-IMAP,PARHIP-IMAP,Kirchbach0T20}. These architectures
are described by a sequence $a_1:a_2:\ldots:a_l$ which can be interpreted as each processor
having $a_1$ cores, each node $a_2$ processors, each rack $a_3$ nodes, and so forth. The
communication costs between the cores can be described by a sequence $d_1:d_2:\ldots:d_l$, meaning
that two cores in the same processor communicate with cost $d_1$, two cores in the same node but different
processors communicate with cost $d_2$, and so forth. Since $\osteiner$ reverts to the objective
function for process mapping on graphs, we compare \Partitioner{Mt-KaHyPar-D/-Q}
to the best performing algorithms of Ref.~\cite{KAFFPA-IMAP}. This includes \Partitioner{KaFFPa-Fast/-Strong}
(multilevel algorithm that directly optimizes $\osteiner$, similar to \Partitioner{Mt-KaHyPar-D/-Q}), \Partitioner{KaFFPa-Strong$_{2P}$}
(best two-phase approach), and \Partitioner{KaFFPa-Strong$_{\text{Gmsec}}$} (highest-quality algorithm).
\Partitioner{KaFFPa-Strong$_{\text{Gmsec}}$} optimizes the cut-net metric, but uses global multisectioning (Gmsec) to partition the graph along the hierarchy,
meaning it first partitions the graph into $a_1$ blocks, then each block further into $a_2$ blocks, and so forth.
For the experiments, we use the instances from their paper ($25$ graphs) and map them onto processor architectures described by the
sequence $4:8:r$ with $r \in \{1,2,\ldots,6\}$ and communication costs $1:10:100$ ($5$ repetitions, $\varepsilon = 3\%$, $8$ hours
time limit). Figure~\ref{fig:instances} (right) summarizes different properties of the benchmark instances.

\begin{figure}[!t]
  \centering
  \ifpdfplots
    \includegraphics{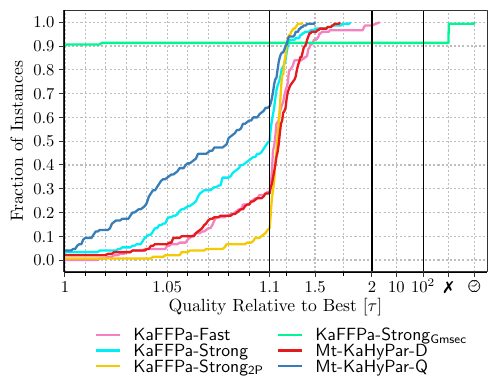}
  \else
    \tikzsetnextfilename{pdf_plots/hierarchical_process_mapping}%
    \input{tikz_plots/hierarchical_process_mapping}%
  \fi
  \vspace{-0.75cm}
  \caption{Performance profiles comparing the solution quality ($\osteiner$) of \Partitioner{Mt-KaHyPar-D/-Q} to the different algorithms
           for hierarchical process mapping.}
	\label{fig:hierarchical_process_mapping}
  \vspace{-0.25cm}
\end{figure}

As it can be seen in Figure~\ref{fig:hierarchical_process_mapping}, \Partitioner{KaFFPa-Strong$_{\text{Gmsec}}$}
is the best performing algorithm for hierarchical process mapping. It computes better mappings than \Partitioner{Mt-KaHyPar-Q}
by $\placeholder{graph}{median_improvement_mt_kahypar_q_vs_strong_gmsec}\%$ in the median, but it is almost two orders of magnitude slower
(\gmeantime~$\placeholder{graph}{gmean_time_mt_kahypar_q}$s vs $\placeholder{graph}{gmean_time_kaffpa_strong_gmsec}$s).
This is on par with the results of Ref.~\cite{KAFFPA-IMAP} where the authors also note that the improvements of \Partitioner{KaFFPa-Strong$_{\text{Gmsec}}$}
over other algorithms becomes less pronounced when mapping graphs onto larger processor architectures.

The median improvement of \Partitioner{Mt-KaHyPar-Q} over \Partitioner{Mt-KaHyPar-D},
\Partitioner{KaFFPa-Strong$_{2P}$}, \Partitioner{KaFFPa-Fast}, and \Partitioner{KaFFPa-Strong} is
$\placeholder{graph}{median_improvement_mt_kahypar_q_vs_d}\%$, $\placeholder{graph}{median_improvement_mt_kahypar_q_vs_strong_2p}\%$,
$\placeholder{graph}{median_improvement_mt_kahypar_q_vs_fast}\%$, and $\placeholder{graph}{median_improvement_mt_kahypar_q_vs_strong}\%$.
\Partitioner{Mt-KaHyPar-Q} ($\placeholder{graph}{gmean_time_mt_kahypar_q}$s, $10$ threads) is slower than
\Partitioner{KaFFPa-Fast} ($\placeholder{graph}{gmean_time_kaffpa_fast}$s), and \Partitioner{Mt-KaHyPar-D} ($\placeholder{graph}{gmean_time_mt_kahypar_d}$s, $10$ threads),
but its running time is comparable to \Partitioner{KaFFPa-Strong} ($\placeholder{graph}{gmean_time_kaffpa_strong}$s) and
it is more than an order of magnitude faster than \Partitioner{KaFFPa-Strong$_{2P}$} ($\placeholder{graph}{gmean_time_kaffpa_strong_2p}$s).
More details on the running times of the different algorithms can be found in Figure~\ref{fig:relative_running_times} (right).

\section{Conclusion and Future Work}\label{sec:conclusion}
In this paper, we introduced a HGP formulation based on the Steiner tree metric to
accurately model wire-lengths in VLSI design and communication costs in distributed systems.
Our main algorithmic contribution is a direct $k$-way multilevel algorithm for optimizing the Steiner tree metric.
We implemented an efficient gain table data structure for constant-time lookup of gain values.
Furthermore, we devised a novel flow network model to optimize the metric via max-flow min-cut computations.
Our experimental results demonstrates that our new algorithm has the potential to significantly reduce wire-lengths in VLSI design.

Future research includes a more efficient and memory-friendly algorithm for precomputing all Steiner trees,
enabling the mapping of hypergraphs onto larger target graphs.
Additionally, we want to develop a placement and routing algorithm for VLSI design leveraging our new algorithm.

\bibliography{steiner_trees}

\end{document}
